\newcommand{\bccolor}{Green}
\newcommand\cte[1]{\text{\ttfamily\upshape #1}}  % Constants
\newcommand\ord[5]{\cte{order}(#1,~#2,~#3,~#4,~#5)}
\newcommand\ordN[6]{\cte{ordIn}(#1,~#2,~#3,~#4,~#5,~#6)}
\newcommand\ordQ[1]{\cte{queue}(#1)}
\newcommand\prcQ[3]{\cte{priceQ}(#1,~#2,~#3)}
\newcommand\actP[2]{\cte{actPrices}(#1,~#2)}
\newcommand\tm[1]{\cte{time}(#1)}
\newcommand\natless[2]{\cte{{\color{\bccolor} nat-less}}(#1,~#2)}
\newcommand\natgreat[2]{\cte{{\color{\bccolor} nat-great}}(#1,~#2)}
\newcommand\nateq[2]{\cte{{\color{\bccolor} nat-equal}}(#1,~#2)}
\newcommand\natminus[2]{\cte{{\color{\bccolor} nat-minus}}(#1,~#2)}
\newcommand\dual[2]{\cte{{\color{\bccolor} dual}}(#1,~#2)}
\newcommand\maxP[2]{\cte{{\color{\bccolor} maxP}}(#1,~#2)}
\newcommand\minP[2]{\cte{{\color{\bccolor} minP}}(#1,~#2)}
\newcommand\store[3]{\cte{{\color{\bccolor} store}}(#1,~#2,~#3)}
\newcommand\exchange[4]{{\color{\bccolor} \cte{exchange}}(#1,~#2,~#3,~#4)}
\newcommand\mktex[3]{{\color{\bccolor} \cte{mktExchange}}(#1,~#2,~#3)}
\newcommand\ins[3]{{\color{\bccolor} \cte{insert}}(#1,~#2,~#3)}
\newcommand\remove[3]{{\color{\bccolor} \cte{remove}}(#1,~#2,~#3)}
\newcommand\removeF[3]{{\color{\bccolor} \cte{removeF}}(#1,~#2,~#3)}
\newcommand\notInList[2]{\cte{{\color{\bccolor} notInList}}(#1,~#2)}
\newcommand\notInListF[2]{\cte{{\color{\bccolor} notInListF}}(#1,~#2)}
\newcommand\inListF[2]{\cte{{\color{\bccolor} inListF}}(#1,~#2)}
\newcommand\consP[4]{\cte{consP}(#1,~#2,~#3,~#4)}
\newcommand\ext[5]{\cte{{\color{\bccolor} extendP}}(#1,~#2,~#3,~#4,~#5)}
\newcommand\enq[3]{\cte{{\color{\bccolor} enq}}(#1,~#2,~#3)}
\newcommand\cbuy{\cte{buy}}
\newcommand\csell{\cte{sell}}
\newcommand\cnil{\cte{nil}}
\newcommand\cnilp{\cte{nilP}}
\newcommand\cniln{\cte{nilN}}
\newcommand\cgenp[4]{\cte{gen}(#1,~#2,~#3,~#4)}
\newcommand\signcl{\Sigma_{NLC}}
\newcommand\cbid{\mathit{bid}}
\newcommand\cask{\mathit{ask}}
\newtheorem{property}{Property}
\newtheorem{thm}[property]{Theorem}
\newtheorem{defn}[property]{Definition}
\newenvironment{proof}[1][Proof]{\begin{trivlist}
\item[\hskip \labelsep {\bfseries #1}]}{\end{trivlist}}
\title{Formalization of Automated Trading Systems\\
  in a Concurrent Linear Framework\thanks{This paper was made possible by grant
  NPRP 7-988-1-178 from the Qatar National Research Fund (a member of the Qatar
  Foundation). The statements made herein are solely the responsibility of the
  authors.}
}
\author{
Iliano Cervesato \qquad\qquad Sharjeel Khan \qquad\qquad Giselle Reis \qquad\qquad Dragi\v{s}a \v{Z}uni\'{c}
\institute{Carnegie Mellon University}
\email{iliano@cmu.edu \quad\quad smkhan@andrew.cmu.edu \quad\qquad giselle@cmu.edu \quad\qquad dzunic@andrew.cmu.edu}
}
\begin{document}
\maketitle

\begin{abstract}
  We present a declarative and modular specification of an automated
  trading system (ATS) in the concurrent linear framework CLF\@.  We
  implemented it in Celf, a CLF type checker which also supports
  executing CLF specifications.  We outline the verification of two
  representative properties of trading systems using generative
  grammars, an approach to reasoning about CLF specifications.
\end{abstract}

\section{Introduction}

Trading systems are platforms where buy and sell orders are
automatically matched. Matchings are executed according to the
operational specification of the system. In order to guarantee trading
fairness, these systems must meet the requirements of regulatory
bodies, in addition to any internal requirement of the trading
institution. However, both specifications and requirements are
presented in natural language which leaves space for ambiguity and
interpretation errors.  As a result, it is difficult to guarantee
regulatory compliance~\cite{DeBel93}.  For example, the main US
regulator, the Securities and Exchanges Commission (SEC), has fined
several companies, including Deutsche Bank (37M in 2016),
Barclay's Capital (70M in 2016), Credit Suisse (84M in 2016), UBS
(19.5M in 2015) and many others~\cite{Freedman2015} for non-compliance.

Modern trading systems are complex pieces of software with intricate
and sensitive rules of operation.  Moreover they are in a state of
continuous change as they strive to support new client requirements
and new order types.  Therefore it is difficult to attest that they
satisfy all requirements at all times using standard software testing
approaches.  Even as regulatory bodies recently demand that systems
must be ``fully tested''~\cite{FCA2018}, experience has shown that
(possibly unintentional) violations often originate from unforeseen
interactions between order types~\cite{PassmoreIgnatovich2017}.

Formalization and formal reasoning can play a big role in mitigating
these problems. They provide methods to verify properties of complex
and infinite state space systems with certainty, and have already been
applied in fields ranging from microprocessor design~\cite{Jones2001},
avionics~\cite{Souyris2009}, election security~\cite{Pattison15AJCAI},
and financial derivative
contracts~\cite{PeytonJones2000,BahrElsman2015}.  Trading systems are
a prime candidate~as~well.

In this paper we use the logical framework CLF~\cite{clftechrep1} to
specify and reason about trading systems. CLF is a linear concurrent
extension of the long-established LF
framework~\cite{Harper93}. Linearity enables natural encoding of state
transition, where facts are consumed and produced thereby changing the
system's state. The concurrent nature of CLF is convenient to account
for the possible orderings of exchanges.

The contributions of this research are twofold: (1) We formally define
an archetypal automated trading system in CLF~\cite{clftechrep1} and
implement it as an executable specification in Celf.  (2) We
demonstrate how to prove some properties about the specification using
generative grammars~\cite{Simmons12}, a technique for meta-reasoning
in CLF\@.

The paper is organized as follows: Section~\ref{sec:CLF} introduces the
concurrent logical framework CLF\@. Section~\ref{sec:ATS} introduces the core
concepts related to automated trading systems (ATS), followed by
Section~\ref{sec:formlimit}, which presents the formalization of an ATS in
CLF/Celf. Section~\ref{sec:proof} contains proofs of two properties based on
generative grammars, going towards automated reasoning in CLF\@.  We conclude and
outline possible further developments in Section~\ref{sec:conclusion}.

\section{Concurrent Linear Logic and Celf}
\label{sec:CLF}

The logical framework CLF~\cite{clftechrep1} is based on a fragment of
intuitionistic linear logic. It extends the logical framework
LF~\cite{Harper93} with the linear connectives $\multimap$,
$\binampersand$, $\top$, $\otimes$, $1$ and $!$ to obtain a
resource-aware framework with a satisfactory representation of
concurrency. The rules of the system impose a discipline on when the
synchronous connectives $\otimes$, $1$ and $!$ are decomposed, thus
still retaining enough determinism to allow for its use as a logical
framework.  Being a type-theoretical framework, CLF unifies
implication and universal quantification as the dependent product
construct.  For simplicity we present only the logical fragment of CLF
(i.e., without terms) needed for our encodings. A detailed description
of the full framework can be found in~\cite{clftechrep1}.

We divide the formulas in this fragment of CLF into two classes:
\emph{negative} and \emph{positive}.  Negative formulas have right
invertible rules and positive formulas have left invertible rules.
Their grammar is:
{\small
\[
\begin{array}{rllr}
N, M & ::= & a \multimap N \mid a \rightarrow N \mid \{ P \} \mid \forall x. N \mid a & \text{(negative formulas)}\\
P, Q & ::= & P \otimes Q \mid 1 \mid ! a \mid a & \text{(positive formulas)}
\end{array}
\]}%
where $a$ is an atom (i.e., a predicate).  Positive formulas are
enclosed in the lax modality $\{\cdot\}$, which ensures that their
decomposition happens atomically.

\newcommand{\negR}[3]{#1; #2 \vdash #3}
\newcommand{\negL}[3]{#1; #2 \vdash #3}
\newcommand{\posR}[3]{#1; #2 \vdash {\color{red} #3}}
\newcommand{\posL}[4]{#1; #2; {\color{blue} #3} \vdash #4}

\begin{figure}
\small
\[
\infer[1_l]
  {\posL{\Gamma}{\Delta}{\Psi, 1}{P_0}}
  {\posL{\Gamma}{\Delta}{\Psi}{P_0}}
\quad
\infer[\otimes_l]
  {\posL{\Gamma}{\Delta}{\Psi, P \otimes Q}{P_0}}
  {\posL{\Gamma}{\Delta}{\Psi, P, Q}{P_0}}
\quad
\infer[!_l]
  {\posL{\Gamma}{\Delta}{\Psi, !a}{P_0}}
  {\posL{\Gamma, a}{\Delta}{\Psi}{P_0}}
\quad
\infer[\mathsf{st}]
  {\posL{\Gamma}{\Delta}{\Psi, a}{P_0}}
  {\posL{\Gamma}{\Delta, a}{\Psi}{P_0}}
\quad
\infer[\mathsf{L}]
  {\posL{\Gamma}{\Delta}{\cdot}{P_0}}
  {\negL{\Gamma}{\Delta}{P_0}}
\]
\\[-3.5ex]
\[
\infer[1_r]
  {\posR{\Gamma}{\cdot}{1}}
  {}
\quad
\infer[\otimes_r]
  {\posR{\Gamma}{\Delta_1, \Delta_2}{P \otimes Q}}
  {\posR{\Gamma}{\Delta_1}{P}
  &\posR{\Gamma}{\Delta_2}{Q}}
\quad
% ! can retain focus because it is only applied to positives
\infer[!_r]
  {\posR{\Gamma}{\cdot}{!a}}
  {\posR{\Gamma}{\cdot}{a}}
\quad
\infer[\mathsf{R}]
  {\posR{\Gamma}{\Delta}{a}}
  {\negR{\Gamma}{\Delta}{a}}
\quad
\infer[\mathsf{init}]
  {\negL{\Gamma}{a}{a}}
  {}
\]
\\[-3.5ex]
\[
\infer[\multimap_l]
  {\negL{\Gamma}{\Delta_1, \Delta_2, a \multimap N}{F}}
  {\posR{\Gamma}{\Delta_1}{a}
  &\negL{\Gamma}{\Delta_2, N}{F}}
\quad
\infer[\rightarrow_l]
  {\negL{\Gamma}{\Delta, a \rightarrow N}{F}}
  {\posR{\Gamma}{\cdot}{a}
  &\negL{\Gamma}{\Delta, N}{F}}
\quad
\infer[\forall_l]
  {\negL{\Gamma}{\Delta, \forall x. N}{F}}
  {\negL{\Gamma}{\Delta, N[x \mapsto t]}{F}}
\quad
\infer[\{\}_l]
  {\negL{\Gamma}{\Delta, \{ P \}}{P_0}}
  {\posL{\Gamma}{\Delta}{P}{P_0}}
\]
\\[-3.5ex]
\[
\infer[\multimap_r]
  {\negR{\Gamma}{\Delta}{a \multimap N}}
  {\negR{\Gamma}{\Delta, a}{N}}
\quad
\infer[\rightarrow_r]
  {\negR{\Gamma}{\Delta}{a \rightarrow N}}
  {\negR{\Gamma, a}{\Delta}{N}}
\quad
\infer[\forall_r]
  {\negR{\Gamma}{\Delta}{\forall x. N}}
  {\negR{\Gamma}{\Delta}{N[x \mapsto \alpha]}}
\quad
\infer[\{\}_r]
  {\negR{\Gamma}{\Delta}{\{ P \}}}
  {\posR{\Gamma}{\Delta}{P}}
\quad
\infer[\mathsf{cont}]
  {\negL{\Gamma, N}{\Delta}{C}}
  {\negL{\Gamma, N}{\Delta, N}{C}}
\]

\caption{Sequent calculus for a fragment of CLF\@. $N$ is a negative formula,
$P$ and $Q$ are positive formulas, $P_0$ is either an atom or $\{ P \}$, $F$ is any
formula, $a$ is an atom, $\alpha$ is an eigenvariable and $t$ is a term.}
\label{fig:clf-seq}
\end{figure}

The sequent calculus proof system for this fragment of CLF is
presented in Figure~\ref{fig:clf-seq}.  The sequents make use of
either two or three contexts on the left: $\Gamma$ contains
unrestricted formulas, $\Delta$ contains linear formulas and $\Psi$,
when present, contains positive formulas. The decomposition phase of a
positive formula is indicated in {\color{red}red} on the right and in
{\color{blue}blue} on the left.
These phases end (by means of rules \textsf{L} or \textsf{R}) after the formula
is completely decomposed.

Since CLF has both the linear and intuitionistic implications, we can
specify computation in two different ways. Simplifying somewhat, linear implication formulas
are interpreted as multiset rewriting: the bounded resources on the
left are consumed and those on the right are produced. State
transitions can be modeled naturally this way.  Intuitionistic
implication formulas are interpreted as backward-chaining rules
\emph{\`a la} Prolog, providing a way to compute solutions for a
predicate by matching it with the head (rightmost predicate) of a
rule and solving the body. In this paper, predicates defined by backward-chaining
rules are written in {\color{\bccolor}green}.

The majority of our encoding involves rules in the following shape (for atomic
$p_i$ and $q_i$):\\
$p_1 \otimes ... \otimes p_n \multimap \{ q_1 \otimes ... \otimes q_m \}$
which is the uncurried version of:
$p_1 \multimap ... \multimap p_n \multimap \{ q_1 \otimes ... \otimes q_m \}$.

This framework is implemented as the tool Celf
(\url{https://clf.github.io/celf/}) which we used for the encodings.
Following the tool's convention, variable names start with an upper-case letter.

\begin{figure*}
\begin{minipage}[c]{1.\textwidth}
\begin{center}
\includegraphics[width=.75\columnwidth]{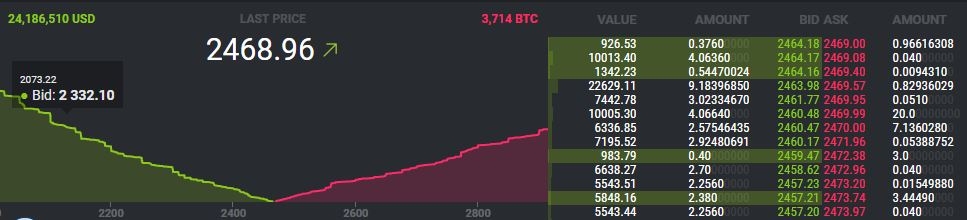}
\end{center}
\end{minipage}
\caption{Visualization of the market view}
\label{fig:marketView}
\end{figure*}

\section{Automated Trading System (ATS)}
\label{sec:ATS}

Real life trading systems differ in the details of how they manage orders (there
are hundreds of order types in use~\cite{Mackintosh2014}). However, there is a
certain common core that guides all those trading systems, and which embodies
the market logic of trading on an exchange.  We have formalized those elements
in what we call an automated trading system, or an ATS\@. In what follows we
introduce the basic notions.

An \emph{order} is an investor's instruction to a broker to buy or sell
securities (or any asset type which can be traded).  They enter an ATS
sequentially and are \emph{exchanged} when successfully matched against opposite
order(s).  In this paper, we will be concerned with \emph{limit}, \emph{market}
and \emph{cancel} orders.

A \emph{limit order} has a specified limit price, meaning that it will trade at
that price or better.  In the case of a limit order to sell, a limit price $P$
means that the security will be sold at the best available price in the market,
but no less than $P$.  And dually for buy orders. If no exchange is possible,
the order stays in the market waiting to be exchanged -- these are called
\emph{resident orders}.
A \emph{market order} does not specify the price, and will be immediately
matched against opposite orders in the market. If none are available, the order
is discarded. A \emph{cancel order} is an instruction to remove a resident order
from the market.

A \emph{matching algorithm} determines how resident orders are
prioritized for exchange, essentially defining the mode of operation
of a given ATS\@.  The most common one is \emph{price/time priority}.
Resident orders are first ranked according to their price
(increasingly for sell and decreasingly for buy orders); orders with
the same price are then ranked depending on when they entered.

Figure~\ref{fig:marketView} presents a visualization of a (Bitcoin)
market.  The left-hand side (green) contains resident buy orders,
while the right-hand side (pink) contains resident sell orders. The price
offered by the most expensive buy order is called \emph{bid} and the
cheapest sell order is called \emph{ask}.  The point where they
(almost) meet is the \emph{bid-ask} spread, which, at that particular
moment, was around 2,468 USD\@.

Standard regulatory requirements for real world trading
systems include: the bid price is always strictly less than the ask price (i.e., no
locked -- $\cbid$ is equal to $\cask$ -- or crossed -- $\cbid$ is greater than $\cask$ --
states), the trade always takes place at either $\cbid$ or $\cask$, the
price/time priority is always respected when exchanging orders, the order priority
function is transitive, among others.
%and the system does not prohibit a valid exchange.

\section{Formalization of an ATS}
\label{sec:formlimit}

We have formalized the most common components of an ATS in the logical
framework CLF and implemented them in Celf. The formalization of an ATS is
divided into three parts.  First, we represent the market infrastructure using
some auxiliary data structures.  Then we determine how to represent the basic
order types and how they are organized for processing. Finally we encode the
exchange rules which act on incoming orders.

Since we are using a linear framework, the state of the system is
naturally represented by a set of facts which hold at that point in
time.  Each rule consumes some of these facts and generates others,
thus reaching a new state.  Many operations are dual for buy and sell
orders, so, whenever possible, predicates and rules are parameterized
by the action (\cte{sell} or \cte{buy}, generically denoted $A$). The
machinery needed in our formalization includes natural numbers, lists
and queues. Their encoding relies on the backward-chaining semantics
of Celf.

The full encoding can be found at
\url{https://github.com/Sharjeel-Khan/financialCLF}.

\subsection{Infrastructure}

The trading system's infrastructure is represented by the following four linear
predicates:
{\small
$$
\ordQ{Q} \qquad \prcQ{A}{P}{Q} \qquad \actP{A}{L} \qquad \tm{T}
$$}%
Predicate $\ordQ{Q}$ represents the queue in which orders are inserted for
processing. As orders arrive in the market, they are assigned a
timestamp and added to $Q$.  For an action $A$ and price $P$, the
queue $Q$ in $\prcQ{A}{P}{Q}$ contains all resident orders with those
attributes. Due to how orders are processed, the queue is sorted in
ascending order of timestamp.  We maintain the invariant that price
queues are never empty.  Price queues correspond to columns in the
graph of Figure~\ref{fig:marketView}.  For an action $A$, the list $L$
in $\actP{A}{L}$ contains the exchange prices available in the market,
i.e., all the prices on the $x$-axis of Figure~\ref{fig:marketView}
with non-empty columns.  Note that the bid price is the maximum of $L$
when $A$ is \cte{buy} and the ask price is the minimum when $A$ is \cte{sell}.  The
time is represented by the fact $\tm{T}$ and increases as the state
changes.

The \cte{begin} fact is the entry point in our formalization. This
fact starts the ATS\@.  It is rewritten to an empty order queue, empty
active price lists for \cte{buy} and \cte{sell}, and the zero time:

{\small
\[
\cte{begin} \multimap \{ \ordQ{\cte{empty}} \otimes \actP{\cte{buy}}{\cnil} \otimes \actP{\cte{sell}}{\cnil} \otimes \tm{\cte{z}} \}
\]
}

\subsection{Orders' Structure}

An order is represented by a linear fact $\ord{O}{A}{P}{ID}{N}$, where $O$ is
the type of order, $A$ is an action, $P$ is the order price, $\mathit{ID}$ is
the identifier of the order and $N$ is the quantity. $P$, $\mathit{ID}$ and $N$
are natural numbers. In this paper, $O$ is one of \cte{limit}, \cte{market}, or
\cte{cancel}. An order predicate in the context is consumed and added to the
order queue for processing via the following rule:

{\small
\[
\begin{array}{rl}
&\ord{O}{A}{P}{ID}{N} \otimes \ordQ{Q} \otimes \tm{T} \otimes \enq{Q}{\ordN{O}{A}{P}{ID}{N}{T}}{Q'} \\
\multimap &\{ \ordQ{Q'} \otimes \tm{\cte{s}(T)} \}
\end{array}
\]
}

The predicate is transformed into a term $\ordN{O}{A}{P}{ID}{N}{T}$
containing the same arguments plus the timestamp $T$. This term is
added to the order queue $Q$ by the (backward-chaining) predicate
{\color{\bccolor}\cte{enq}}.  This queue allows the sequential processing of
orders given their time of arrival in the market, thus simulating what happens
in reality. The timestamp is also used to define resident order priority.
Sequentiality is guaranteed as all state transition rules act only on the first
order in the queue. Nevertheless, due to Celf's non-determinism, orders are
added to the queue in an arbitrary order.

\subsection{Limit Orders}

According to the matching logic, there are two basic actions for every limit
order in the queue: exchange (partially or completely) or add to the market
(becomes resident).  The action taken depends on the order's limit price (at
which it is willing to trade), the bid and ask prices, as well as the quantity
of resident orders\footnote{Sometimes an incoming limit order will be partially
filled, with the remainder (once resident orders that match this limit price are
filled) becoming a new resident order.}.

\paragraph{Adding orders to the market}
An order is added to the market when its limit price $P$ is such that it cannot
be exchanged against opposite resident orders. Namely when $P < \cask$ in the
case of a buy order, and when $P > \cbid$ in the case of a sell order.  There
are two rules for adding an order, depending on whether there are resident
orders at that price in the market or not (see Figure~\ref{fig:limitOrdersAdd}).
The (backward chaining) predicate {\color{\bccolor}\cte{store}} is provable when
the order cannot be exchanged.

\begin{figure*}[h]
{\footnotesize
\[
\begin{array}{rl}
\cte{limit/empty:} & \ordQ{\cte{front}(\ordN{\cte{limit}}{A}{P}{ID}{N}{T},Q)} \otimes \dual{A}{A'} \otimes \actP{A'}{L'} \otimes \phantom{.} \\
                   & \store{A}{L'}{P} \otimes \actP{A}{L} \otimes \notInList{L}{P} \otimes \ins{L}{P}{LP} \otimes \tm{T} \\
         \multimap & \{ \ordQ{Q} \otimes \actP{A'}{L'} \otimes \prcQ{A}{P}{\consP{ID}{N}{T}{\cte{nilP}}}  \\
                   & \phantom{\{} \otimes \actP{A}{LP} \otimes \tm{\cte{s}(T)} \} \\
    \\
\cte{limit/queue:} & \ordQ{\cte{front}(\ordN{\cte{limit}}{A}{P}{ID}{N}{T},Q)} \otimes \dual{A}{A'} \otimes \actP{A'}{L'} \otimes \phantom{.} \\
                   & \store{A}{L'}{P} \otimes \prcQ{A}{P}{PQ} \otimes \ext{PQ}{ID}{N}{T}{PQ'} \otimes \tm{T} \\
    \multimap &\{ \ordQ{Q} \otimes \actP{A'}{L'} \otimes \prcQ{A}{P}{PQ'} \otimes \tm{\cte{s}(T)} \}
\end{array}
\]
}
\caption{Adding limit orders to the market}
\label{fig:limitOrdersAdd}
\end{figure*}

The first line is the same for both. Given the order at the \cte{front} of the
order queue, the predicate \cte{dual} will bind $A'$ to the dual action of $A$
(i.e., if $A$ is \cte{buy}, $A'$ will be \cte{sell} and vice-versa).  Then
\actP{$A'$}{$L'$} binds $L'$ to the list of active prices of $A'$.  The incoming
order can be added to the market only if there is no dual resident order at an
acceptable price. For example, if $A$ is \cte{buy} at price $P$, any resident
\cte{sell} order with price $P$ \emph{or less} would be an acceptable match.
The predicate \cte{store} holds iff there is no acceptable match.

The second line of each rule distinguishes whether the new order to be added is
the first one at that price (\cte{limit/empty} rule -- $\notInList{L}{P}$) or
not.  If it is, the active price list is updated by backward chaining on
$\ins{L}{P}{LP}$, and rewriting $\actP{A}{L}$ to $\actP{A}{LP}$.  Additionally,
a new price queue is created with that order alone:
$\prcQ{A}{P}{\consP{ID}{N}{T}{\cte{nilP}}}$.  If there are resident orders at
the same price (and action), the existing price queue is extended with the new
order by backward chaining on $\ext{PQ}{ID}{N}{T}{PQ'}$ and by rewriting
~$\prcQ{A}{P}{PQ}$ to $\prcQ{A}{P}{PQ'}$. Both rules increment the time by one
unit.

\paragraph{Exchanging orders}
The rules for exchanging orders are presented in
Figure~\ref{fig:limitOrdersEx}. A limit order is exchanged when its limit
price $P$ satisfies $P \leq \cbid$, in the case of sell orders, or $P \geq
\cask$ for buy orders.

\begin{figure*}[ht]
{\footnotesize
\[
\begin{array}{rl}
% 1
\cte{limit/1:} & \ordQ{\cte{front}(\ordN{\cte{limit}}{A}{P}{ID}{N}{T},Q)} \otimes \dual{A}{A'} \otimes \actP{A'}{L'} \otimes \phantom{a} \\
               & \exchange{A}{L'}{P}{X} \otimes \prcQ{A'}{X}{\consP{ID'}{N'}{T'}{\cte{nilP}}} \otimes \remove{L'}{X}{L''}~ \otimes\\
               & \nateq{N}{N'} \otimes \tm{T} \\
     \multimap & \{ \ordQ{Q} \otimes \actP{A'}{L''} \otimes \tm{\cte{s}(T)} \} \\
\\
% 2
\cte{limit/2:} & \ordQ{\cte{front}(\ordN{\cte{limit}}{A}{P}{ID}{N}{T},Q)} \otimes \dual{A}{A'} \otimes \actP{A'}{L'} \otimes \phantom{a} \\
               & \exchange{A}{L'}{P}{X} \otimes \prcQ{A'}{X}{\consP{ID'}{N'}{T'}{\consP{ID1}{N1}{T1}{L}}} ~\otimes\\
               & \nateq{N}{N'} \otimes \tm{T} \\
     \multimap & \{ \ordQ{Q} \otimes \actP{A'}{L'} \otimes \prcQ{A'}{X}{\consP{ID1}{N1}{T1}{L}} \otimes \tm{\cte{s}(T)} \} \\
\\
% 3
\cte{limit/3:} & \ordQ{\cte{front}(\ordN{\cte{limit}}{A}{P}{ID}{N}{T},Q)} \otimes \dual{A}{A'} \otimes \actP{A'}{L'} \otimes \phantom{a} \\
               & \exchange{A}{L'}{P}{X} \otimes \prcQ{A'}{X}{\consP{ID'}{N'}{T'}{\cte{nilP}}} \otimes \remove{L'}{X}{L''} \otimes \phantom{a} \\
               & \natgreat{N}{N'} \otimes \natminus{N,N'}{N''} \\
     \multimap & \{ \ordQ{\cte{front}(\ordN{\cte{limit}}{A}{P}{ID}{N''}{T},Q)} \otimes \actP{A'}{L''} \} \\
\\
% 4
\cte{limit/4:} & \ordQ{\cte{front}(\ordN{\cte{limit}}{A}{P}{ID}{N}{T},Q)} \otimes \dual{A}{A'} \otimes \actP{A'}{L'} \otimes \phantom{a} \\
               & \exchange{A}{L'}{P}{X} \otimes \prcQ{A'}{X}{\consP{ID'}{N'}{T'}{\consP{ID1}{N1}{T1}{L}}} \otimes \phantom{a} \\
               & \natgreat{N}{N'} \otimes \natminus{N,N'}{N''} \\
     \multimap & \{ \ordQ{\cte{front}(\ordN{\cte{limit}}{A}{P}{ID}{N''}{T},Q)} \otimes \actP{A'}{L'} \otimes \phantom{a} \\
               & \prcQ{A'}{X}{\consP{ID1}{N1}{T1}{L}} \} \\
\\
% 5
\cte{limit/5:} & \ordQ{\cte{front}(\ordN{\cte{limit}}{A}{P}{ID}{N}{T},Q)} \otimes \dual{A}{A'} \otimes \actP{A'}{L'} \otimes \phantom{a} \\
               & \exchange{A}{L'}{P}{X} \otimes \prcQ{A'}{X}{\consP{ID'}{N'}{T'}{L}} ~\otimes\\
               & \natless{N}{N'} \otimes \natminus{N',N}{N''} \otimes \tm{T} \\
     \multimap & \{ \ordQ{Q} \otimes \actP{A'}{L'} \otimes \prcQ{A'}{X}{\consP{ID'}{N''}{T'}{L}} \otimes \tm{\cte{s}(T)} \}
\end{array}
\]
}
\caption{Exchanging limit orders}
\label{fig:limitOrdersEx}
\end{figure*}

The (backward chaining) predicate {\color{\bccolor}\cte{exchange}} binds $X$ to
the exchange price (either $\cbid$ or $\cask$). We distinguish between an
incoming order that ``consumes'' all the quantity available at price $X$, or
only a part of the combined quantity available. The arithmetic comparison and
operations are implemented in the usual backward-chaining way using a unary
representation of natural numbers.

All five rules start the same way: the first line binds $L'$ to the list of
active prices of the dual orders.  The backward-chaining predicate
$\exchange{A}{L'}{P}{X}$ holds iff there is a matching resident order. In this
case, it binds $X$ to the best available market price.
The first order in the price queue for $X$ has priority and will be exchanged.
Let $N$ be the quantity in the incoming order and $N'$ be the quantity of the
resident order with highest priority. There are three cases:

\begin{itemize}[leftmargin=*]
\item $N = N'$ (rules \cte{limit/1} and \cte{limit/2}):
Both orders will be completely exchanged. The incoming order is removed from the
order queue by continuing with $\ordQ{Q}$. The resident order is removed from
its price queue and we distinguish two cases:
  \begin{itemize}
  \item It is the last element in the queue (\cte{limit/1}): then the fact
  $\prcQ{\_}{\_}{\_}$ is not rewritten and $X$ is removed from the list of
  active prices by the backward chaining predicate $\remove{L'}{X}{L''}$. This
  list is rewritten from $\actP{A'}{L'}$ to $\actP{A'}{L''}$.
  \item Otherwise (\cte{limit/2}), the resident order is removed from the
  price queue by rewriting\newline
  $\prcQ{A'}{X}{\consP{ID'}{N}{T'}{\consP{ID1}{N1}{T1}{L}}}$ to\newline
  $\prcQ{A'}{X}{\consP{ID1}{N1}{T1}{L}}$.
  \end{itemize}
\item $N > N'$ (rules \cte{limit/3} and \cte{limit/4}): The incoming order will be
partially exchanged, and not leave the order queue as long as there are matching
resident orders. At each exchange its quantity is updated to $N''$, the
difference between $N$ and $N'$, computed by the backward-chaining predicate
$\natminus{N,N'}{N''}$. The order queue is rewritten to\\
$\ordQ{\cte{front}(\ordN{\cte{limit}}{A}{P}{ID}{N''}{T},Q)}$.
The resident order will be completely consumed and removed from the market.
Therefore, we need to distinguish two cases as before (last element in its
price queue -- \cte{limit/3} -- or not -- \cte{limit/4}).
\item $N < N'$ (rule \cte{limit/5}): The incoming order is completely exchanged and
removed from the order queue (rewritten to $\ordQ{Q}$). The resident order is
partially exchanged and its quantity is updated to $N''$, computed by the
backward chaining rule $\natminus{N',N}{N''}$.
Notice that the price queue is rewritten with the order in the same position, so
its priority does not change.
\end{itemize}

By convention, the time is only updated once an order is completely processed
and removed from the order queue.

\subsection{Market Orders}
\label{subs:market}

A market order is meant to be exchanged immediately at current market prices. As
long as there are available sellers or buyers, market orders are exchanged. The
remaining part of a market order is discarded. Certainty of execution is a
priority over the price of execution.

In this case, there are no rules for adding/storing of market orders. The
exchange rules are similar to the ones for limit orders, with subtle
differences.
Market orders do not have a desired price as an attribute, they only have a
desired quantity. Therefore exchanging is continued as long as the quantity was not
reached and there are available resident orders (the price $P$ is nominally
presented but it is never used).
In the rules, the predicate $\exchange{A}{L}{P}{X}$ is replaced with
$\mktex{A}{L}{X}$, which simply binds $X$ to the best offer in the
market.  In the unlikely event that there are no more dual resident orders
(verified by checking if $L$ in $\actP{A'}{L}$ is empty), the order is removed
from the order queue.

%% MARKET ORDERS
\begin{figure*}[ht]
{\footnotesize
\[
\begin{array}{rl}
% empty
\cte{market/empty:}
        & \ordQ{\cte{front}(\ordN{\cte{market}}{A}{P}{ID}{N}{T},Q)} \otimes \dual{A}{A'} \otimes \actP{A'}{\cniln} ~\otimes \\
        & \tm{T} \\
     \multimap & \{ \ordQ{Q} \otimes \actP{A'}{\cniln} \otimes \tm{\cte{s}(T)} \} \\[2mm]
% 1
\cte{market/1:} & \ordQ{\cte{front}(\ordN{\cte{market}}{A}{P}{ID}{N}{T},Q)} \otimes \dual{A}{A'} \otimes \actP{A'}{L'} ~\otimes\\
                & \mktex{A}{L'}{Y} \otimes \prcQ{A'}{Y}{\consP{ID'}{N'}{T'}{\cnilp}} \otimes \remove{L'}{Y}{L''} ~\otimes  \\
                & \nateq{N}{N'} \otimes \tm{T} \\
     \multimap  & \{\ordQ{Q} \otimes \actP{A'}{L''} \otimes \tm{\cte{s}(T)} \} \\
\\
% 2
\cte{market/2:} & \ordQ{\cte{front}(\ordN{\cte{market}}{A}{P}{ID}{N}{T},Q)} \otimes \dual{A}{A'} \otimes \actP{A'}{L'} ~\otimes \\
                & \mktex{A}{L'}{Y} \otimes \prcQ{A'}{Y}{\consP{ID'}{N'}{T'}{(\consP{ID1}{N1}{T1}{L}})} ~\otimes \\
                & \nateq{N}{N'} \otimes \tm{T} \\
     \multimap  & \{ \ordQ{Q} \otimes \prcQ{A'}{Y}{(\consP{ID1}{N1}{T1}{L})} \otimes  \actP{A'}{L'} \otimes \tm{\cte{s}(T)} \} \\
\\
% 3
\cte{market/3:} & \ordQ{\cte{front}(\ordN{\cte{market}}{A}{P}{ID}{N}{T},Q)} \otimes \dual{A}{A'} \otimes \actP{A'}{L'} \otimes\phantom{a} \\
                & \mktex{A}{L'}{Y} \otimes \prcQ{A'}{Y}{\consP{ID'}{N'}{T'}{\cte{nilP}}} \otimes \remove{L'}{Y}{L''} ~\otimes\\
                &\natgreat{N}{N'} \otimes \natminus{N,N'}{N''} \\
     \multimap  & \{  \ordQ{\cte{front}(\ordN{\cte{limit}}{A}{P}{ID}{N''}{T},Q)} \otimes \actP{A'}{L''} \} \\
     \\
% 4
\cte{market/4:} & \ordQ{\cte{front}(\ordN{\cte{market}}{A}{P}{ID}{N}{T},Q)} \otimes \dual{A}{A'} \otimes \actP{A'}{L'} \otimes\phantom{a} \\
                & \mktex{A}{L'}{Y} \otimes \prcQ{A'}{Y}{\consP{ID'}{N'}{T'}{(\consP{ID1}{N1}{T1}{L}})} ~\otimes\\
                &\natgreat{N}{N'} \otimes \natminus{N,N'}{N''} \\
     \multimap  & \{ \ordQ{\cte{front}(\ordN{\cte{limit}}{A}{P}{ID}{N''}{T},Q)} ~\otimes \\
                & \prcQ{A'}{Y}{(\consP{ID1}{N1}{T1}{L})} \otimes \actP{A'}{L'} \} \\
     \\
% 5
\cte{market/5:} & \ordQ{\cte{front}(\ordN{\cte{market}}{A}{P}{ID}{N}{T},Q)} \otimes \dual{A}{A'} \otimes \actP{A'}{L'} ~\otimes\\
                & \mktex{A}{L'}{Y} \otimes \prcQ{A'}{Y}{\consP{ID'}{N'}{T'}{L}} ~\otimes\\
                &\natless{N}{N'} \otimes \natminus{N',N}{N''}  \otimes \tm{T} \\
     \multimap  & \{ \ordQ{Q} \otimes \prcQ{A'}{Y}{(\consP{ID'}{N''}{T'}{L})} \otimes \actP{A'}{L'} \otimes \tm{\cte{s}(T)} \}
\end{array}
\]
}
\caption{Exchanging market orders}
\label{fig:marketOrdersEx}
\end{figure*}

The rules for exchanging market orders are given in
Figure~\ref{fig:marketOrdersEx}.
The first rule, \cte{market/empty}, addresses the situation when a market order
is in the order queue, but there are no opposite resident orders to be matched
against. The order is then removed from the order queue.
Rules \cte{market/1} and \cte{market/2} address the case when the incoming
market order's quantity is the same as the quantity of the best available
opposite resident order. In this case these orders are simply exchanged, and
again we distinguish cases whether the resident order was the last in the
queue (\cte{market/1}) or not (\cte{market/2}).
Rules \cte{market/3} and \cte{market/4} address the case when the incoming
market order's quantity is greater than the quantity of the best available
opposite resident order, i.e., when $N>N'$. In this case the resident order is
exchanged and the rest of the market order remains in the order queue. The
two rules distinguish whether the resident order was the last in the price
queue (\cte{market/3}) or not (\cte{market/4}).
Finally, rule \cte{market/5} describes the situation when an incoming market
order is strictly less (quantity-wise) compared to the best opposite resident
order. The considered market order is exchanged completely whereas the resident
order only partially.

\subsection{Cancel Orders}
\label{subs:cancel}

Cancel order is an instruction to remove a particular resident order from the
trading system.  Cancel orders refer to a resident order by its identifier.  If,
by chance, the order to be canceled is not in the market, nothing happens and
the cancel order is removed from the order queue.  If it is there, it is removed
from the price queue. Similarly as in exchanging limit orders this results in two
sub-cases: the order is the last one in its price queue or not.

The rules for performing order canceling are given in Figure~\ref{fig:cancelOrders}.
%
%% Cancel ORDERS
\begin{figure*}[ht]
{\footnotesize
\[
\begin{array}{rl}
% 1
\cte{cancel/inListNil:} & \ordQ{\cte{front}(\ordN{\cte{cancel}}{A}{P}{ID}{N}{T},Q)} \otimes \actP{A}{L'} ~\otimes \\
                & \prcQ{A}{P}{\consP{ID}{N}{T'}{\cte{nilP}}} \otimes \remove{L'}{P}{L''} \otimes \tm{T} \\
     \multimap  & \{\ordQ{Q} \otimes \actP{A}{L''} \otimes \tm{\cte{s}(T)} \} \\
     \\
% 2
\cte{cancel/inListCons:} & \ordQ{\cte{front}(\ordN{\cte{cancel}}{A}{P}{ID}{N}{T},Q1)}~\otimes \\
                & \prcQ{A}{P}{Q} \otimes \inListF{Q}{ID} \otimes \removeF{Q}{ID}{Q'} \otimes \tm{T} \\
     \multimap  & \{\ordQ{Q1} \otimes \prcQ{A}{P}{Q'} \otimes \tm{\cte{s}(T)} \} \\
     \\
% 3
\cte{cancel/notInListQueue:} & \ordQ{\cte{front}(\ordN{\cte{cancel}}{A}{P}{ID}{N}{T},Q1)}~\otimes \\
                & \prcQ{A}{P}{Q} \otimes \notInListF{Q}{ID} \otimes \tm{T} \\
     \multimap  & \{\ordQ{Q1} \otimes \prcQ{A}{P}{Q} \otimes \tm{\cte{s}(T)} \} \\
     \\
% 4
\cte{cancel/notInListActive:} & \ordQ{\cte{front}(\ordN{\cte{cancel}}{A}{P}{ID}{N}{T},Q1)} \otimes \actP{A}{L'} ~\otimes \\
                & \notInList{L'}{P} \otimes \tm{T} \\
     \multimap  & \{\ordQ{Q1}  \otimes \actP{A}{L'} \otimes \tm{\cte{s}(T)} \}
\end{array}
\]
}
\caption{Cancel orders}
\label{fig:cancelOrders}
\end{figure*}

\section{Towards a Mechanized Verification of ATS Properties}
\label{sec:proof}

Using our formalization we are able to check that this combination of
order-matching rules does not violate some of the expected ATS
properties.
Although CLF is a powerful logical framework fit for specifying the
syntax and semantics of concurrent systems, stating and proving
properties about these systems goes beyond its current expressive
power. For this task, one needs to consider states of computation, and
the execution traces that lead from one state to another. Recent
developments show that CLF contexts (the states of computation) can be described
in CLF itself through the notion of generative grammars~\cite{Simmons12}.
These are grammars whose language consists of all possible CLF contexts which
satisfy the property being considered. The general idea is to show that every
reachable state consists of a context in this language.

Using such grammars plus reasoning on steps and traces of computation, it is
possible to state and prove meta-theorems about CLF specifications. This method
is structured enough to become the meta-reasoning engine behind
CLF~\cite{CervesatoS13}, and therefore it is used for the proofs in this paper.

\subsection{No Locked or Crossed Market}

Here we show that the $\cbid$ price $B$ is always less than the $\cask$ price
$S$ in any reachable state using the rules presented in the previous sections.
In other words, we show the following invariant:

\begin{property}[No locked-or-crossed market]
\label{prop.1}
If $\actP{\cbuy}{L_B}$ and $\actP{\csell}{L_S}$ and $\maxP{L_B}{B}$ and
$\minP{L_S}{S}$, then $B < S$.
\end{property}

Definition~\ref{grammarNLC} shows a grammar that generates contexts, or states,
satisfying Property~\ref{prop.1}. This is achieved by the guards
$\maxP{L_B}{B}$, $\minP{L_S}{S}$, $B < S$ on the rewriting rule \cte{gen/0} for
the start symbol $\cgenp{Q}{L_B}{L_S}{T}$.

\begin{defn}
\label{grammarNLC}
The following generative grammar $\signcl$\footnote{$NLC$ stands for
no locked-or-crossed.} produces only contexts where $\cbid < \cask$.

{\footnotesize
\[
\begin{array}{lcl}
% 0
\cte{gen/0}     &:& \cgenp{Q}{L_B}{L_S}{T} \otimes \maxP{L_B}{B} \otimes \minP{L_S}{S} \otimes B < S \\
                & & \multimap \{\ordQ{Q} \otimes \actP{\cbuy}{L_B} \otimes \actP{\csell}{L_S} \otimes \tm{T} \\
                & & \qquad \otimes~ \cte{gen-buy}(L_B) \otimes \cte{gen-sell}(L_S)\}.\\[1.5mm]
% buy1
\cte{gen/buy1}  &:& \cte{gen-buy}(\cnilp) \multimap \{1\}.\\[1.5mm]
% buy2
\cte{gen/buy2}  &:& \cte{gen-buy}(P::L_B) \multimap \{\prcQ{\cbuy}{P}{L} \otimes \cte{gen-buy}(L_B)\}.\\[1.5mm]
% sell1
\cte{gen/sell1} &:& \cte{gen-sell}(\cnilp) \multimap \{1\}.\\[1.5mm]
% sell2
\cte{gen/sell2} &:& \cte{gen-sell}(P::L_S) \multimap \{\prcQ{\csell}{P}{L} \otimes \cte{gen-sell}(L_S)\}.\\[1.5mm]
\end{array}
\]
}
\end{defn}

Intuitively, to show that the market is never in a locked-or-crossed state, we
show that, given a context generated by the grammar in
Definition~\ref{grammarNLC}, the application of an ATS rule (one step) will
result in another context that can also be generated by this grammar.
Coupled with the fact that computation starts at a valid context, this shows
that the property is always preserved. More formally, we will show the theorem:

\begin{thm}
\label{thm.1}
For every $\Delta \in L(\signcl)$ and rule $\sigma$, if 
$\Delta \xrightarrow{\sigma} \Delta'$, then $\Delta' \in L(\signcl)$.
\end{thm}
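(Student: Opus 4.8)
The plan is to proceed by case analysis on the rewriting rule $\sigma$, showing in each case that if $\Delta$ is generated by $\signcl$ then so is the resulting context $\Delta'$. The first step is to unpack what membership in $L(\signcl)$ means: by the shape of the grammar, a context $\Delta \in L(\signcl)$ is obtained by firing $\cte{gen/0}$ once (which chooses $Q$, $L_B$, $L_S$, $T$ subject to $\maxP{L_B}{B}$, $\minP{L_S}{S}$ and $B < S$) and then fully expanding $\cte{gen-buy}(L_B)$ and $\cte{gen-sell}(L_S)$ via the four remaining rules. So $\Delta$ consists of exactly $\ordQ{Q}$, $\actP{\cbuy}{L_B}$, $\actP{\csell}{L_S}$, $\tm{T}$, together with one price-queue fact $\prcQ{\cbuy}{P}{L}$ for each $P$ in $L_B$ and one $\prcQ{\csell}{P}{L}$ for each $P$ in $L_S$, and the invariant $\max(L_B) < \min(L_S)$ holds. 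I would record this as a structural characterization lemma, since every subsequent case refers back to it.

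Next I would organize the ATS rules into groups by their effect on the pair $(L_B, L_S)$ and on the multiset of price queues. The rules that only touch the order queue, the timestamp, or a single price queue without changing $L_B$ or $L_S$ (e.g.\ \cte{limit/2}, \cte{limit/4}, \cte{limit/5}, the analogous \cte{market} rules, \cte{cancel/inListCons}, \cte{cancel/notInListQueue}, \cte{cancel/notInListActive}) are essentially immediate: the witnesses $L_B$, $L_S$ for $\Delta'$ are the same as for $\Delta$, so the guard $B < S$ is inherited verbatim, and one only has to check the price-queue facts still match up (which they do, since these rules rewrite a $\prcQ{\cdot}{\cdot}{\cdot}$ to another $\prcQ{\cdot}{\cdot}{\cdot}$ at the same price and action, or leave it untouched). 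The interesting cases are the ones that change $L_B$ or $L_S$: \cte{limit/empty} and \cte{limit/queue} (which insert a price into $L_A$), and \cte{limit/1}, \cte{limit/3}, \cte{market/1}, \cte{market/3}, \cte{cancel/inListNil} (which remove a price from $L_{A'}$). For the removal rules I must show the invariant is preserved when the min or max of a list is deleted — this is monotone, since removing elements from $L_B$ can only decrease $\max(L_B)$ and removing from $L_S$ can only increase $\min(L_S)$, so $B' \le B < S \le S'$ still gives $B' < S'$ (with the side condition that the list stays nonempty, which is exactly the case split between the \cte{.../1} and \cte{.../3}-style rules versus their \cte{.../2}, \cte{.../4} siblings that don't touch the active list). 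For \cte{cancel/inListNil} one additionally checks that the removed price queue had the order being cancelled as its sole element, matching the grammar's per-price $\prcQ$ fact.

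The genuinely hard case — and the main obstacle — is \cte{limit/empty} (and \cte{limit/queue}), where a new price $P$ is added to the active list $L_A$ for the incoming order's own action $A$. Here I must exploit the side condition $\store{A}{L'}{P}$ that guards the rule: this predicate holds precisely when the incoming limit order at price $P$ cannot be matched against the dual book $L'$, i.e.\ when $P < \min(L')$ in the buy case and $P > \max(L')$ in the sell case (this is the crux, and I would isolate it as a lemma about the backward-chaining definition of $\cte{store}$). Given that, if $A = \cbuy$ then the new $L_B' = L_B \cup \{P\}$, and $\max(L_B') = \max(B, P)$; since $P < \min(L_S) = S$ by the $\cte{store}$ lemma and $B < S$ by assumption, we get $\max(L_B') < S = \min(L_S')$, so the guard holds for $\Delta'$. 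The sell case is dual. For \cte{limit/empty} I also check the freshly created price queue $\prcQ{A}{P}{\consP{ID}{N}{T}{\cnilp}}$ is exactly the one the grammar would produce for the new price $P$, and for \cte{limit/queue} that extending an existing price queue keeps the $\prcQ$-per-price correspondence intact (the value $L$ in the grammar's $\prcQ{A}{P}{L}$ is unconstrained, so any queue contents are fine). Assembling a fresh derivation of $\Delta' \in L(\signcl)$ then amounts to firing $\cte{gen/0}$ with the updated witnesses and re-expanding $\cte{gen-buy}$ and $\cte{gen-sell}$, which I would state once and reuse across all cases.
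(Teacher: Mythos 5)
Your proposal is correct and follows essentially the same route as the paper: case analysis on $\sigma$, dismissing rules that leave the active price lists untouched, a monotonicity argument ($S \le S'$ when a price is removed from $L_S$, dually for $L_B$) for the removal rules, and the key lemma that $\store{A}{L'}{P}$ forces $P$ to be on the correct side of the dual book's extremum for \cte{limit/empty}. One small slip: \cte{limit/queue} does not insert a price into the active list (the price is already present and only its $\prcQ{\cdot}{\cdot}{\cdot}$ fact is rewritten), so it belongs with your trivial group — as your own later remark about unconstrained queue contents in the grammar already implies.
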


This theorem can be represented visually as:
\[
\xymatrix{
\cgenp{Q}{L_B}{L_S}{T} \ar[d]^{\epsilon} & \cgenp{Q'}{L_B'}{L_S'}{T'} \ar[d]^{\epsilon'}\\
\Delta \ar[r]^{\sigma}                   & \Delta'
}
\]
The proof consists in showing the existence of $\epsilon'$.

% Proof T1
\begin{proof}
The proof proceeds by case analysis on $\sigma$.
We consider only rules that change the linear facts
$\actP{\cbuy}{L_B}$ and $\actP{\csell}{L_S}$, since otherwise we can
simply take $\epsilon' = \epsilon$ (possibly with different
instantiations for the variables $L$ in $\cte{gen/sell2}$ and
$\cte{gen/buy2}$).  Moreover, we restrict ourselves to the case of
incoming $\cbuy$ orders. The case for $\csell$ is analogous.

%% case 1: limit/empty
\paragraph{\fbox{Case $\sigma = \cte{limit/empty}$}}
This rule rewrites $L_B$, the list of buy prices, to a list $L_B'$
which extends $L_B$ by a new price $P$. Since ${\color{\bccolor}\cte{store}}$
was provable, we know that $P$ is less than the minimum sell price in the
market.
For \cte{limit/empty} to be applicable, we need:

{\small
\[
\Delta = \{ \ordQ{Q}, \actP{\cbuy}{L_B}, \actP{\csell}{L_S}, \tm{T} \} \cup \Delta_1
\]
}
\noindent
In which case we conclude that:

{\small
\[
\begin{array}{rrl}
Q    &= &\cte{front}(\ordN{\cte{\cbuy}}{A}{P}{ID}{N}{T},Q') \\
\epsilon &= &\cte{gen/0}(Q_g, L_B, L_S, T); \epsilon_1; \epsilon_2
\end{array}
\]}%
where $\epsilon_1; \epsilon_2$ rewrite $\cte{gen-buy}(L_B)$ and
$\cte{gen-sell}(L_S)$ to the $\prcQ{\_}{\_}{\_}$ facts that form $\Delta_1$.

\noindent
After applying the rule, the context is modified to:

{\small
\[
\begin{array}{rcl}
    \Delta' &=& \{ \ordQ{Q'}, \actP{\cbuy}{L_B'}, \actP{\csell}{L_S}, \tm{\cte{s}(T)}\}\\
    && \cup~ \{\prcQ{\cbuy}{P}{\consP{ID}{N}{T}{\cte{nilP}}}, \Delta_1\}
\end{array}
\]}%
where $L_B'$ is computed by the $\ins{L_B}{P}{L_B'}$ rule and consists of $L_B$
augmented by $P$.

\noindent
A derivation of $\Delta'$ can be obtained via the following steps:

{\small
\[
\epsilon' = \cte{gen/0}(Q, L_B', L_S, \cte{s}(T)); \epsilon_1; \cte{gen/sell2}; \epsilon_2
\]}%
where one extra step $\cte{gen/sell2}$, with $L = \consP{ID}{N}{T}{\cte{nilP}}$, 
generates $\prcQ{\cbuy}{P}{L}$. 
Observe that the guard $B < S$ in $\cte{gen/0}$ still
holds: if $\maxP{L_B'}{B}$ and $B \neq P$, then $B < S$ was part of the
assumption.  In case $\maxP{L_B'}{P}$, observe that $\store{\cbuy}{L_S}{P}$ only
holds if $P < S$, where $\minP{L_S}{S}$. This property is related only to
backward chaining predicates and can be proved in the LF framework Twelf using
standard techniques.

% CASE 2: limit/1
\paragraph{\fbox{Case $\sigma = \cte{limit/1}$}}
This rule rewrites $L_S$, the list of sell prices, to a list $L_S'$ which
consists of $L_S$ without a price $X$.
For $\cte{limit/1}$ to be applicable, we need:

{\small
\[
\begin{array}{rcl}
    \Delta &=& \{\ordQ{Q}, \actP{\cbuy}{L_B}, \actP{\csell}{L_S}, \tm{T}\}\\
        && \cup~ \{\prcQ{\csell}{X}{\consP{ID'}{N}{T'}{\cte{nilP}}}, ~\Delta_1\}
\end{array}
\]}%
where $X \in L_S$ is computed by $\exchange{\cbuy}{L_S}{P}{X}$.

\noindent
This can be derived by:

{\small
\[
\begin{array}{rrl}
Q      &= &\cte{front}(\ordN{\cte{limit}}{\cbuy}{P}{ID}{N}{T},Q')\\
\epsilon &= &\cte{gen/0}(Q_g, L_B, L_S, T); \epsilon_1; \cte{gen/sell2}; \epsilon_2
\end{array}
\]}%
where $\epsilon_1; \cte{gen/sell2}; \epsilon_2$ rewrite $\cte{gen-buy}(L_B)$ and
$\cte{gen-sell}(L_S)$ to the $\prcQ{\_}{\_}{\_}$ facts that form $\Delta_1$,
with the explicit $\cte{gen/sell2}$ generating the fact
$\prcQ{\csell}{X}{\consP{ID'}{N}{T'}{\cte{nilP}}}$.

\noindent
After applying the rule, the context is modified to:

{\small
\[
\Delta' = \{ \ordQ{Q'}, \actP{\cbuy}{L_B}, \actP{\csell}{L_S'}, \tm{\cte{s}(T)} \} \cup \Delta_1
\]}%
where $L_S'$ is computed by the $\remove{L_S}{X}{L_S'}$ rule and consists of
$L_S$ without $X$.

\noindent
A derivation of $\Delta'$ can be obtained via the following steps:

{\small
\[
\epsilon' = \cte{gen/0}(Q, L_B, L_S', \cte{s}(T)); \epsilon_1; \epsilon_2
\]}%
Since $L_S' \subset L_S$, then, considering $\minP{L_S}{S}$ and
$\minP{L_S'}{S'}$, it is the case that $S \leq S'$. Thus $B < S$ implies
$B < S'$.
This can be proved in Twelf given the specification of the appropriate relations
(such as $\subset$).

% CASE 3: limit/3
\paragraph{\fbox{Case $\sigma = \cte{limit/3}$}}
This case is analogous to $\cte{limit/1}$, except that the incoming order is
only partially exchanged because its quantity $N$ is greater than the quantity
$N'$ of the matching resident order.
The initial context is:

{\small
\[
\begin{array}{rcl}
\Delta &=& \{\ordQ{Q}, \actP{\cbuy}{L_B}, \actP{\csell}{L_S}, \tm{T}\}\\
        && \cup~ \{\prcQ{\csell}{X}{\consP{ID'}{N'}{T'}{\cte{nilP}}},\Delta_1\}
\end{array}
\]}%
where $X \in L_S$ is computed by $\exchange{\cbuy}{L_S}{P}{X}$.

\noindent
Which can be derived as before:

{\small
\[
\begin{array}{rrl}
Q     &= &\cte{front}(\ordN{\cte{limit}}{\cbuy}{P}{ID}{N}{T},Q''),~N>N' \\
\epsilon &= &\cte{gen/0}(Q, L_B, L_S, T); \epsilon_1; \cte{gen/sell2}; \epsilon_2
\end{array}
\]
}

\noindent
where $\epsilon_1; \cte{gen/sell2}; \epsilon_2$ rewrite $\cte{gen-buy}(L_B)$ and
$\cte{gen-sell}(L_S)$ to the $\prcQ{\_}{\_}{\_}$ facts that form $\Delta_1$,
with the explicit $\cte{gen/sell2}$ generating the fact
$\prcQ{\csell}{X}{\consP{ID'}{N'}{T'}{\cte{nilP}}}$.

\noindent
After applying the rule, the context is modified to:

{\small
\[
\Delta' = \{ \ordQ{Q'}, \actP{\cbuy}{L_B}, \actP{\csell}{L_S'}, \tm{\cte{s}(T)} \} \cup \Delta_1
\]
}
where $L_S'$ is computed by the $\remove{L_S}{X}{L_S'}$ rule and consists of
$L_S$ without $X$.

\noindent
A derivation of $\Delta'$ can be obtained via the following steps (note that
time does not change for this rule):

{\small
\[
\begin{array}{rcl}
    Q' &=& \cte{front}(\ordN{\cte{limit}}{\cbuy}{P}{ID}{N-N'}{T},Q'')\\
    \epsilon' &=& \cte{gen/0}(Q', L_B, L_S', T); \epsilon_1; \epsilon_2
\end{array}
\]}%
Since $L_S' \subset L_S$, then, considering $\minP{L_S}{S}$ and
$\minP{L_S'}{S'}$, it is the case that $S \leq S'$. Thus $B < S$ implies
$B < S'$.
As before, this argument can be developed in Twelf.\\

\noindent
The cases for market orders, $\sigma = \cte{market/\_}$, are analogous to the
$\cte{limit/\_}$ cases above. As for the case $\sigma = \cte{cancel/inListNil}$
- this rule rewrites $L_B$, the list of buy prices, to $L_B'$ which is $L_B$
without a price $P$. This case is analogous to $\cte{limit/1}$ and
$\cte{limit/3}$, except that $\epsilon$ contains an application of
$\cte{gen/buy2}$ which is deleted to obtain $\epsilon'$.

\end{proof}

\subsection{Exchanges happen at $\cbid$ or  $\cask$}

Every incoming order will either be exchanged or, if this is not possible,
stored as a resident order to be exchanged when it is matched (if ever).
The {\color{\bccolor}\cte{exchange}} predicate will be provable exactly when the
incoming order can be exchanged, i.e., there exists an opposite resident order at an
``acceptable'' price. The acceptable price is: lower than the price of an
incoming \cbuy{} order, or greater than the price of an incoming \csell{} order.
Since orders have an associated quantity, the exchange may partially or totally
consume the orders.

The price at which the exchange takes place is the best possible with respect to
the \emph{incoming} order.
Consequently, an incoming \cbuy{} order is exchanged at the minimal sell price
($\cask$), while an incoming \csell{} order is exchanged against the maximal
available buy price ($\cbid$).
Therefore whenever an order is exchanged it happens at either $\cbid$ or $\cask$
price, and \emph{only} at that price. In this section we show that this is
indeed the case for our encoding.  We will consider only the rules specifying
exchange of \emph{limit} orders (Figure~\ref{fig:limitOrdersEx}). The case for
\emph{market} orders follows in a similar fashion, but considering the predicate
{\color{\bccolor}mktExchange} instead of {\color{\bccolor}exchange}.

\begin{property}
\label{prop.2}
All exchanges happen at and only at the price $\cbid$ or $\cask$.
\end{property}

Note that Property~\ref{prop.1} is a property of the reachable \emph{states}, while
Property~\ref{prop.2} is concerns \emph{transitions} between states.
We can split it into two parts:
(1) in every exchange, only one resident order of price $X$ is consumed;
and
(2) $X$ is $\cbid$ or $\cask$.

\smallskip

Part (1) can be more formally stated as:

\begin{thm}
\label{thm:bidask.1}
Let $\Delta, \Delta' \in L(\signcl)$, and $\sigma$ be one of the exchange
rules $\cte{limit/i}$ for $1 \leq \mathtt{i} \leq 5$.
If $\Delta \xrightarrow{\sigma} \Delta'$, then for all
$\prcQ{A}{Y}{L} \in \Delta$ if $Y \neq X$ then $\prcQ{A}{Y}{L} \in \Delta'$,
where $X$ is determined by $\exchange{A'}{L}{P}{X}$ on the left side of rule
$\sigma$.
\end{thm}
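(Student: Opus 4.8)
The plan is to prove the claim by a direct case analysis on the rule $\sigma$, reading each exchange rule as a multiset-rewriting step. Recall that a rule $p_1 \otimes \cdots \otimes p_n \multimap \{q_1 \otimes \cdots \otimes q_m\}$ fires on a context $\Delta$ by deleting the linear facts matching $p_1,\dots,p_n$ and inserting $q_1,\dots,q_m$, while every linear fact of $\Delta$ that is not consumed survives verbatim in $\Delta'$. The green, backward-chaining atoms occurring on the left of the $\cte{limit/i}$ rules --- $\dual{\cdot}{\cdot}$, $\exchange{\cdot}{\cdot}{\cdot}{\cdot}$, $\remove{\cdot}{\cdot}{\cdot}$, $\nateq{\cdot}{\cdot}$, $\natgreat{\cdot}{\cdot}$, $\natless{\cdot}{\cdot}$ and $\natminus{\cdot}{\cdot}$ --- are side conditions discharged by backward chaining, not linear resources drawn from $\Delta$. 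So it suffices to establish one purely syntactic fact about Figure~\ref{fig:limitOrdersEx}: each $\cte{limit/i}$ consumes exactly one $\prcQ{\cdot}{\cdot}{\cdot}$ atom, and its price argument is the value $X$ returned by the $\cte{exchange}$ premise.

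I would then walk through the five rules. In each of them the linear left-hand side consists of an order queue $\ordQ{\cte{front}(\ldots)}$, the active-price list $\actP{\cdot}{\cdot}$ of the dual action, possibly $\tm{T}$, and exactly one price queue $\prcQ{\cdot}{X}{Z}$, where $X$ is what $\exchange{\cdot}{\cdot}{P}{X}$ binds; the rules differ only in the shape of the tail $Z$ (empty in $\cte{limit/1}$ and $\cte{limit/3}$, a $\cnilp$-terminated singleton versus a longer list in the other three) and in the arithmetic relation between the incoming and the resident quantity, but never in the price index, which is $X$ throughout. On the right-hand side, $\cte{limit/1}$ and $\cte{limit/3}$ produce no price queue at all (the queue at $X$ has emptied, and $X$ is simultaneously dropped from the active list via $\remove{\cdot}{X}{\cdot}$), whereas $\cte{limit/2}$, $\cte{limit/4}$ and $\cte{limit/5}$ reinsert a price queue still indexed by $X$, only with modified contents. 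In either case $X$ is the sole price index that the rule removes from, or re-adds to, the collection of $\prcQ{\cdot}{\cdot}{\cdot}$ facts.

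To finish, fix any $\prcQ{A}{Y}{L} \in \Delta$ with $Y \neq X$. By the previous paragraph the unique price-queue fact deleted by $\sigma$ is indexed by $X$; since $Y \neq X$, the fact $\prcQ{A}{Y}{L}$ is distinct from it, hence is not among the atoms consumed by $\sigma$, and therefore $\prcQ{A}{Y}{L} \in \Delta'$, which is exactly the conclusion. The hypotheses $\Delta, \Delta' \in L(\signcl)$ are not actually needed for this direction; they only guarantee that the step is taken between well-formed states, and are stated for uniformity with Theorem~\ref{thm.1}.

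There is no conceptual obstacle here: the whole argument is bookkeeping over the five rules of Figure~\ref{fig:limitOrdersEx}, and the only real risk is miscounting the linear atoms of some rule or overlooking that $\cte{exchange}$ returns a single price. To guard against that I would first write out, for each $\cte{limit/i}$, the left- and right-hand multisets of linear facts side by side, check that the two case splits (the tail of $Z$; the comparison of the incoming and resident quantities) are jointly exhaustive and that no branch ever mentions a second $\prcQ{\cdot}{\cdot}{\cdot}$, and only then invoke the frame property. The same skeleton, with $\mktex{\cdot}{\cdot}{\cdot}$ in place of $\exchange{\cdot}{\cdot}{\cdot}{\cdot}$, covers the $\cte{market/i}$ rules.
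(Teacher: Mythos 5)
Your proposal is correct and follows exactly the paper's argument: the paper's proof is a one-line inspection of the $\cte{limit/i}$ rules observing that the only $\prcQ{A}{Z}{L}$ facts involved in the rewriting are those with $Z = X$, and your rule-by-rule bookkeeping plus the frame property of multiset rewriting is just that same observation spelled out in more detail.
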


\begin{proof}
By inspection of the rules $\cte{limit/i}$, we observe that the only facts of
the shape $\prcQ{A}{Z}{L}$ involved in the rewriting are those where $Z = X$,
where $X$ is bound by $\exchange{A'}{L}{P}{X}$.
\end{proof}

Part (2) of the property can be stated more precisely as:

\begin{thm}
\label{thm:bidask.2}
If $\exchange{\cbuy}{L}{P}{X}$ then $\minP{L}{X}$.\\
If $\exchange{\csell}{L}{P}{X}$ then $\maxP{L}{X}$.
\end{thm}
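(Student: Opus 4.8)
This statement involves only the backward-chaining (logic-programming) predicates $\cte{exchange}$, $\cte{minP}$ and $\cte{maxP}$, with no linear context in sight, so the plan is to argue entirely within the backward-chaining fragment, by induction on the structure of the derivation of the $\cte{exchange}$ goal --- equivalently, by induction on the list $L$ following the clauses that define $\cte{exchange}$. By the $\cbuy$/$\csell$ duality (which interchanges $\cte{minP}$ with $\cte{maxP}$ and reverses the arithmetic comparisons), it suffices to treat the first implication; the second is obtained by the mirror-image argument.

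For the $\cbuy$ case, I would first recall what $\exchange{\cbuy}{L}{P}{X}$ actually computes: it succeeds exactly when some resident sell price in $L$ is an acceptable match for an incoming buy at $P$ --- i.e.\ is not greater than $P$ --- and in that case binds $X$ to the \emph{best} such price, which for the sell side is the smallest price in $L$. Thus every successful derivation either has $\minP{L}{X}$ as an immediate subgoal (if $\cte{exchange}$ is defined by first computing the extremal price of $L$ and then guarding it with $\natless{X}{P}$ or $\nateq{X}{P}$), or is a recursion over the structure of $L$ that step by step selects precisely the element that $\cte{minP}$ would. The base case is a singleton $L = X :: \cnil$, for which $\minP{L}{X}$ holds by the base clause of $\cte{minP}$; in the cons case $L = H :: L''$ one appeals to the induction hypothesis on the tail together with the recursive clause of $\cte{minP}$, discharging the comparison between $H$ and the minimum of $L''$ with the unary arithmetic predicates. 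In every case $\minP{L}{X}$ follows.

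The main obstacle --- and really the only content --- is the bookkeeping around the arithmetic side conditions: one needs that the unary comparison predicates $\cte{nat-less}$, $\cte{nat-equal}$ and $\cte{nat-great}$ are total and mutually exclusive on numerals, and that the recursion of $\cte{exchange}$ lines up with that of $\cte{minP}$ (resp.\ $\cte{maxP}$) so that no candidate price is skipped over. These are exactly the obligations handled by the standard mode/coverage/termination checking of Twelf, in the same spirit as the arithmetic lemmas invoked in the proof of Property~\ref{prop.1}; once they are in place the induction above is routine. Combined with Theorem~\ref{thm:bidask.1}, this yields Property~\ref{prop.2} for limit orders, and the market-order case follows by the identical argument with $\cte{mktExchange}$ in place of $\cte{exchange}$.
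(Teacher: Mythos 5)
Your plan is correct and matches the paper's approach: the paper's entire ``proof'' is the observation that the statement concerns only a backward-chaining predicate, so it follows by standard LF meta-reasoning (an induction over the derivation of the $\cte{exchange}$ goal) and is implemented in a few lines of Twelf. Your proposal simply spells out that same induction and its arithmetic side obligations in more detail than the paper does.
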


The statement concerns only a backward chaining predicate, so the proof follows
standard meta-reasoning techniques from LF, and can be implemented in a few
lines in Twelf.

\smallskip

Taken together, Theorems~\ref{thm:bidask.1} and~\ref{thm:bidask.2} guarantee
Property~\ref{prop.2}.

\section{Conclusion and Future Work}
\label{sec:conclusion}

We have formalized the core rules guiding the trade on exchanges
worldwide. We have done this by formalizing an archetypal automated
trading system in the concurrent logical framework CLF, with an
implementation in Celf.

Encoding orders in a market as linear resources results in
straightforward rules that either consume such orders when they are
bought/sold, or store them in the market as resident orders.  Moreover
the specification is modular and easy to extend with new order types,
which is often required in practice. This was our experience when
adding market and immediate-or-cancel types of orders to the
system. The concurrent aspect of CLF simulates the non-determinism
when orders are accumulated in the order queue, but, as explained,
orders from the queue are processed sequentially\footnote{As far as we
  know, no real life trading system performs parallel order matching
  and execution.}.

Using our formalization we were able to prove two standard properties
about a market working under these rules. First we proved that at any
given state the $\cbid$ price is smaller than the $\cask$, i.e., the
market is never in a locked-or-crossed state. Secondly we showed that the
trade always take place at $\cbid$ or $\cask$. The first property was proved
using generative grammars, an approach motivated by our goal to automate
meta reasoning on CLF specifications (not implemented in the current
version of Celf).  Recent investigations indicate that this approach
can handle many meta-theorems~\cite{Simmons12,CervesatoS13} related to
\emph{state} invariants, and ours is yet another example. The second
property is a combination of: (1) a property of a backward chaining
predicate; and (2) a \emph{transition} invariant. The former can be
proved using established methods in LF (in fact, we have proved the
desired property in Twelf). The second can be verified by inspection 
of the rules: the only linear facts that change in the next state, are
those rewritten on the right side of $\multimap$.

This specification is an important case study for developing the
necessary machinery for automated reasoning in CLF\@. It is one more
evidence of the importance of quantification over steps and traces of
a (forward-chaining) computation. It is interesting to note that our
example combines forward and backward-chaining predicates, but the
generative grammar approach still behaves well. In part because we are
only concerned with a linear part of the context. Nevertheless, the proof
still relied on some properties of backward chaining predicates.
In the second proof, this is even more evident. This indicates, unsurprisingly,
that meta-reasoning of CLF specifications must include the already
developed meta-reasoning of LF specifications.
In the meantime, we are investigating other properties of financial
systems that present interesting challenges for meta-reasoning, such
as showing that the price/time priority is respected upon exchange.

The difference between the proofs presented might be an indication that we need
to follow a more general approach than the one used in LF\@. In that framework,
most theorems that motivated the work have the same shape and their proofs
follow the same strategy. Therefore, it is possible to save the user a
lot of work by asking them to specify only the necessary parts that fills in a
``proof template'', which is then checked mechanically. When working with more
ad-hoc systems, such as the case of financial exchanges, the properties and
proofs are less regular, making it harder to figure out a good ``template'' that
fits all properties of interest. In this case, it may be beneficial to leave
more freedom (and consequently more work) to the user, in order to allow more
flexible meta-reasoning.

% In the meantime, we
% are investigating other properties of financial systems that present
% interesting challenges. Besides showing that trade always take place at $\cbid$
% or $\cask$, where we were effectively deliberating about a \emph{transition
% step} as opposed to a \emph{context}, we will consider proving that order
% exchange always respects price/time priority, as well as transitivity of order
% ranking induced by price/time priority. This may result on new and interesting
% methods to add to CLF's meta-reasoning engine.

Concurrently, we plan to formalize other models of financial trading
systems, as this is a relevant application addressing some critical challenges.

\bibliographystyle{eptcs}
\bibliography{references}

\end{document}